\def\qed{$\square$}
\let\eps\varepsilon
\let\phi\varphi
\newtheorem{theorem}{Theorem}
\newtheorem{claim}[theorem]{Claim}
\newtheorem{lemma}[theorem]{Lemma}
\newtheorem{DDDefinition}[theorem]{Definition}
\def\enddefinition{\end{DDDefinition}\egroup\medbreak}
\def\proof{\begin{demo}{Proof}}
\def\endproof{\enspace\hfill\qed\end{demo}\medbreak}
\def\definition{\bgroup\def\@begintheorem##1##2{\trivlist
  \item[\hskip\labelsep{\bfseries ##1\ ##2}]}\begin{DDDefinition}}
\newenvironment{demo}[1]
{\par\medbreak\noindent{\bf #1\enskip}\rm\ignorespaces}{}
\begin{document}
\title{\bf Complexity of universal access structures}
\author{L\'aszl\'o Csirmaz\thanks{This research was partially
supported by the ``Lend\"ulet Program'' of the Hungarian Academy of Sciences
and by the grant NKTH OM-00289/2008}}
\date{\small Central European University, Budapest\\ R\'enyi Institute,
Budapest}
\maketitle

\begin{abstract}
An important parameter in a secret sharing scheme is the number of minimal
qualified sets. Given this number, the universal access structure is the
richest possible structure, namely the one in which there are one or more
participants in every possible Boolean combination of the minimal qualified
sets. Every access structure is a substructure of the universal structure
for the same number of minimal qualified subsets, thus universal access
structures have the highest complexity given the number of minimal qualified
sets. We show that the complexity of the universal structure with $n$
minimal qualified sets is between $n/\log_2n$ and $n/2.7182\dots$
asymptotically.

\noindent{\bf Keywords:} secret sharing; complexity; entropy method;
harmonic series.

\noindent{\bf MSC numbers:} 94A62, 90C25, 05B35.
\end{abstract}

\section{Introduction}

In a secret sharing scheme the access structure defines which subsets of the
participants should recover the secret -- these are the {\it qualified
subsets}. Unqualified subsets are also called {\it independent}. The
collection of qualified subsets is determined uniquely by the {\it minimal}
qualified sets. Suppose in an access structure we have $n$ minimal qualified
sets: $A_1$, $\dots$, $A_n$. For $\eps_i=0$ or $1$ let $A_i^{\eps_i}$ be
$A_i$ when $\eps_i=1$, and the complement of $A_i$ when $\eps_i=0$. The
access structure is {\it universal} if none of the $2^n$ possible
intersections $\bigcap A_i^{\eps_i}$ is empty. In this note $\mathcal U_n$
denotes any universal access structure with $n$ qualified subsets.

Universal structures with at most three minimal qualified sets were
investigated in \cite{ minimal34}, and for four qualified sets in
\cite{minimal4}. For $n=2$ and $n=3$ the exact complexity is known: it is
$1$ and $3/2$, respectively. For $n=4$ it is proved in \cite{minimal4} that
the complexity of $\mathcal U_4$ is between $7/4$ and $11/6$, the exact
value is not known. This paper provides the first bounds for the complexity
of $\mathcal U_n$ for larger values of $n$. The upper bound follows from a
secret sharing scheme defined recursively for each $n$. The lower bound is a
well-tailored application of the {\em independent sequence method}
described, e.g., in \cite{tight}.

\section{Reduction}

In this section we show that all universal structures with $n$ minimal
qualified subsets have the same complexity, and pinpoint a particular
structure which will be investigated later in details. Result of this
section were also proved in \cite{minimal4}.

First we note that the number of participants in each class $\bigcap
A_i^{\eps_i}$ is irrelevant: the complexity of the scheme does not depend on
the number of participants in the classes, it depends only on which classes
are empty and which are not. Consequently, for each $n$, we can define a
unique ``general'' universal access structure with $n$ minimal qualified
sets. As a first step, we can evidently discard all participants who are not
in any of the qualified sets as they can play no role in recovering the
secret. Less trivially we can also discard those participants which are in
all of the qualified sets. Details follow.

Let $\mathcal U^*_n$ be the {\it normalized} structure with $n$ minimal
qualified sets $A_1$, $\dots$, $A_n$, where the intersection $\bigcap
A_i^{\eps_i}$ has exactly one element when not all the $\eps_i$ are equal,
and is empty otherwise. In particular, the structure $\mathcal U^*_n$ is
{\it connected}, no participant is in all, or none, of the minimal qualified
subsets, and has exactly $2^n-2$ participants.

\begin{claim}\label{claim:general}
The complexity of the structures $\mathcal U_n$ and $\mathcal U^*_n$ are
equal: $\sigma(\mathcal U_n) = \sigma(\mathcal U^*_n)$
\end{claim}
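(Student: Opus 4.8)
The plan is to prove the equality by a chain of three reductions, each of which transforms the access structure without changing its complexity, where $\sigma$ is understood through the usual entropic formulation ($\sigma(\mathcal A)$ is the infimum over perfect schemes of $\max_v H(\xi_v)/H(s)$, with $\xi_v$ the share of participant $v$ and $s$ the secret). I index the $2^n$ classes by vectors $\eps\in\{0,1\}^n$, the class $\bigcap A_i^{\eps_i}$. Starting from an arbitrary universal $\mathcal U_n$, I would pass successively to: $\mathcal V_n$, one representative per (nonempty) class, so $2^n$ participants; then $\mathcal W_n$, obtained by deleting the all-zero participant (the class $\eps=(0,\dots,0)$); then $\mathcal U^*_n$, obtained by deleting the all-one participant $p$ (the class $\eps=(1,\dots,1)$). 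The target is $\sigma(\mathcal U_n)=\sigma(\mathcal V_n)=\sigma(\mathcal W_n)=\sigma(\mathcal U^*_n)$.

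For the first reduction I would argue that two participants lying in the same class are interchangeable in the access structure: a set is qualified iff it remains qualified after swapping one for the other. Hence, given a scheme for $\mathcal V_n$, I extend it to $\mathcal U_n$ by handing every participant of a class the share of its representative; this preserves qualification and the maximal share size. Conversely, restricting any scheme for $\mathcal U_n$ to one representative per class yields a scheme for $\mathcal V_n$ whose maximal share can only be smaller, giving $\sigma(\mathcal U_n)=\sigma(\mathcal V_n)$. The second reduction is the trivial one flagged in the text: the all-zero participant lies in no $A_i$, so adding or deleting it never changes whether a set contains some $A_i$; giving it a constant (entropy-zero) share shows $\sigma(\mathcal V_n)=\sigma(\mathcal W_n)$.

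The third reduction is the substantive one. Write $p$ for the all-one participant, so $p\in A_i$ for every $i$; then every qualified set contains $p$, and (for $n\ge2$, where no $A_i$ equals $\{p\}$ by minimality) the singleton $\{p\}$ is unqualified, whence $\xi_p\perp s$. Moreover, for $Q\ni p$ we have $Q$ qualified in $\mathcal W_n$ iff $Q\setminus\{p\}\supseteq A_i\setminus\{p\}$ for some $i$, i.e. iff $Q\setminus\{p\}$ is qualified in $\mathcal U^*_n$. For the inequality $\sigma(\mathcal W_n)\le\sigma(\mathcal U^*_n)$ I would take an optimal $\mathcal U^*_n$-scheme with secret $s$, draw a fresh one-time pad $r$ independent of everything with $H(r)=H(s)$, run that scheme on the re-randomized secret $s\oplus r$ to produce the shares of all $v\neq p$, and set $p$'s share to $r$. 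A routine case analysis on unqualified sets (distinguishing $p\in U$ from $p\notin U$) verifies privacy, reconstruction is immediate, and since every relevant participant forces $\sigma\ge1$ the new maximal share ratio is $\max(1,\sigma(\mathcal U^*_n))=\sigma(\mathcal U^*_n)$.

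The reverse inequality $\sigma(\mathcal U^*_n)\le\sigma(\mathcal W_n)$ is where I expect the real obstacle. The natural move is to take an optimal $\mathcal W_n$-scheme and condition on a fixed value $\xi_p=c$: because $\xi_p\perp s$ one has $H(s\mid\xi_p=c)=H(s)$ for every $c$, and the correspondence above shows that the conditioned shares $(\xi_v)_{v\neq p}$ still reconstruct on each $A_i\setminus\{p\}$ and leak nothing on unqualified sets, so the conditioned object is a valid $\mathcal U^*_n$-scheme. The delicate point is the share sizes: conditioning on a single value $c$ could a priori inflate some $H(\xi_v\mid\xi_p=c)$ above $H(\xi_v)$. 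What I can use is only the averaged bound $\mathbb{E}_c\,H(\xi_v\mid\xi_p=c)=H(\xi_v\mid\xi_p)\le H(\xi_v)\le\sigma(\mathcal W_n)H(s)$. To convert this into a pointwise bound I would either exhibit a value $c$ that is simultaneously good for all (finitely many) participants, or—cleaner—pass to the amortized definition of complexity, replacing the scheme by $N$ independent copies and conditioning on a typical value of the block $(\xi_p^{(1)},\dots,\xi_p^{(N)})$, so that by the asymptotic equipartition property the per-copy conditional entropies converge to $H(\xi_v\mid\xi_p)$ and the ratio approaches $\sigma(\mathcal W_n)$ in the limit. Assembling the four equalities then yields $\sigma(\mathcal U_n)=\sigma(\mathcal U^*_n)$.
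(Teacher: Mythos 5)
Your chain fails at its first and most substantive link, the claimed equality $\sigma(\mathcal U_n)=\sigma(\mathcal V_n)$. That two participants of the same class are interchangeable is an automorphism statement, and it licenses neither of your constructions; in fact both break. Concretely, take $n=2$ with class $(1,0)=\{p_1,p_2\}$ and singleton classes $q$ (class $(0,1)$), $r$ (class $(1,1)$), $z$ (class $(0,0)$), so that $A_1=\{p_1,p_2,r\}$, $A_2=\{q,r\}$, while the minimal qualified sets of $\mathcal V_2$ are $\{p_1,r\}$ and $\{q,r\}$. Copying shares: if each class member receives the representative's share from a $\mathcal V_2$-scheme, then the set $\{p_1,r\}$, which is \emph{unqualified} in $\mathcal U_2$ (it contains neither $A_1$, since $p_2$ is missing, nor $A_2$), holds exactly the shares of the \emph{qualified} set $\{p_1,r\}$ of $\mathcal V_2$ and reconstructs the secret: privacy fails. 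Restricting: in any $\mathcal U_2$-scheme the set $\{p_1,r\}$ is unqualified, so by perfection its shares are independent of the secret; hence after restriction this set, qualified in $\mathcal V_2$, cannot reconstruct: correctness fails. So neither inequality between $\sigma(\mathcal U_n)$ and $\sigma(\mathcal V_n)$ has been established.

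The repairs are precisely the two devices the paper uses in its one-shot proof. Going up (from the reduced structure to $\mathcal U_n$), the representative's share must not be copied but \emph{re-split} inside its class by an independent $k$-out-of-$k$ ideal threshold scheme, so that only the entire class can recover it; since ideal threshold schemes have ratio $1$, the maximal share size is unchanged. Going down, the non-representatives' shares must not be discarded but made \emph{public}, i.e., adjoined to every coalition; then a set $R$ of representatives together with the public shares contains some $A_i$ if and only if $R$ contains all representatives of the classes inside $A_i$, so both correctness and privacy transfer with no increase in share size. Your other two reductions are essentially sound: deleting the all-zero participant is trivial, and your treatment of the all-one participant (the one-time pad $s\oplus r$ in one direction, conditioning on $\xi_p$ with an amortization argument in the other) is correct --- indeed on that last point you are more careful than the paper, whose ``make the shares public'' step faces the same pointwise-conditioning subtlety you identify. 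But without repairing the first step the proof does not go through.
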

\begin{proof}
Let $\mathcal S$ be any scheme realizing $\mathcal U_n$. Pick one
participant from each class $\bigcap A_i^{\eps_i}$ except when all the
$\eps_i$'s are equal, $2^n-2$ participants in total. Make the shares of all
other participants public. This defines a scheme $\mathcal S^*$ realizing
the access structure $\mathcal U^*_n$, and it is clear that the complexity
of $\mathcal S^*$ is less than or equal to the complexity of $\mathcal S$.
This establishes $\sigma(\mathcal U^*_n) \le \sigma(\mathcal U_n)$.

To see the other direction, let $\mathcal S^*$ be a scheme realizing
$\mathcal U^*_n$. For the sake of simplicity we assume that the secret for
$\mathcal S^*$ is a single random bit. Define the scheme $\mathcal S$ for
$\mathcal U_n$ as follows. Write the random bit $s$ as the mod 2 sum of two
random independent bits $r_1$ and $s_1$:
$$
    s = r_1 \oplus s_1.
$$
The secret for $\mathcal S$ will be $s$. Use any perfect ideal $k$ out of
$k$ threshold scheme to distribute $r_1$ among participants in the
intersection $\bigcap A_i$. That is, all participants from this set together
can reconstruct $r_1$, but no proper subset of them (or any any other subset
not containing all participants from $\bigcap A_i$) has any information on
$r_1$. Next, distribute $s_1$ according to the scheme $\mathcal S^*$. 
Participants in the intersection $\bigcap A_i^{\eps_i}$ (not all $\eps_i$
are equal) will be able to recover the share what $\mathcal S^*$ assigns to
them using again an independent realization of a perfect ideal $k$ out of
$k$ threshold scheme (different $k$, of course). It is clear that qualified
subsets of $\mathcal U_n$ can recover the secret $s$, while unqualified
subsets have no information on it. The complexity of $\mathcal S$ is
$\max\{1,\sigma(\mathcal S^*)\}$ (here $1$ comes from distributing $r_1$) as
threshold schemes have complexity $1$. This establishes $\sigma(\mathcal
U_n)\le\sigma(\mathcal U^*_n)$ as required.
\end{proof}

\begin{claim}
$\sigma(\mathcal U_2)=1$.
\end{claim}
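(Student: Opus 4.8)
The plan is to invoke Claim~\ref{claim:general}, which reduces the task to computing $\sigma(\mathcal U^*_2)$, and then to observe that the normalized structure on two minimal qualified sets degenerates into something completely trivial. With $n=2$ the structure $\mathcal U^*_2$ has exactly $2^2-2=2$ participants: one participant $p$ in the class $A_1\cap\overline{A_2}$ (i.e. $\eps_1=1,\eps_2=0$) and one participant $q$ in the class $\overline{A_1}\cap A_2$ (i.e. $\eps_1=0,\eps_2=1$), the two ``all-equal'' classes $A_1\cap A_2$ and $\overline{A_1}\cap\overline{A_2}$ being empty by definition of normalization. Reading off membership, $p\in A_1$ but $p\notin A_2$, and $q\in A_2$ but $q\notin A_1$, so the two minimal qualified sets are the singletons $A_1=\{p\}$ and $A_2=\{q\}$.

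Next I would establish the upper bound $\sigma(\mathcal U^*_2)\le 1$ by exhibiting an explicit scheme: hand each of $p$ and $q$ a private copy of the secret $s$. Then each singleton is qualified, the only proper unqualified set is $\emptyset$, which carries no information about $s$, and every share has exactly the entropy of $s$. Hence this scheme is perfect, realizes $\mathcal U^*_2$, and has complexity $1$.

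For the matching lower bound I would use the standard connectivity argument. Since $\{p\}$ is minimal qualified and its only proper subset $\emptyset$ is unqualified, the share of $p$ determines $s$ while $\emptyset$ alone gives no information about $s$; this forces $\H(\xi_p)\ge\H(s)$, where $\xi_p$ denotes the random variable of $p$'s share, so every realizing scheme has complexity at least $1$ and thus $\sigma(\mathcal U^*_2)\ge 1$. Combining the two inequalities gives $\sigma(\mathcal U^*_2)=1$, and Claim~\ref{claim:general} then yields $\sigma(\mathcal U_2)=\sigma(\mathcal U^*_2)=1$.

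I expect no serious obstacle here. The only thing to get right is the identification of $\mathcal U^*_2$ as the structure in which each of the two participants alone reconstructs the secret; once that is pinned down both inequalities are immediate. The single point demanding a line of care is the lower bound, where one must phrase the entropy inequality $\H(\xi_p)\ge\H(s)$ for a minimal qualified singleton correctly, using that its complement within $A_1$ is the unqualified empty set.
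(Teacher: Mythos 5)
Your proposal is correct and matches the paper's proof: the paper likewise invokes Claim~\ref{claim:general} to pass to $\mathcal U^*_2$, observes that its two participants each form a singleton qualified set, and gives the secret to both. The only difference is that you spell out the lower bound $\H(\xi_p)\ge\H(s)$ explicitly, which the paper leaves implicit as the standard fact that every perfect scheme has complexity at least $1$.
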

\begin{proof}
In $\mathcal U^*_2$ there are $2^2-2 = 2$ participants, and both of them
form a one-element qualified set. Simply give the secret to both of them. 
\end{proof}

\section{Upper bound}

In this section we define a perfect scheme realizing $\mathcal U^*_n$ by
recursion on $n$. Using this scheme we establish an upper bound for the
complexity of the universal scheme. We call participants $p$ and $q$ of
$\mathcal U^*_n$ {\it equivalent} if there is a permutation $\pi$ on the
full set of participants which sends $p$ to $q$ and maps all qualified
subsets into qualified subsets, i.e., $\pi$ is an automorphism of the
structure $\mathcal U^*_n$. It is easy to see that $p$ and $q$ are
equivalent if and only if both of them are members of exactly the same
number of minimal qualified sets. Thus there are $n-1$ equivalence classes,
one for each $i=1,\dots,n-1$.

Suppose $\mathcal S_n$ realizes $\mathcal U^*_n$. Using the standard
symmetrization technique we may assume that $\mathcal S_n$ is {\it fully
symmetrical}, that is, equivalent participants receive shares of the same
size. We denote by $f_n(i)$ this common size of the shares of those who are
in exactly $i$ qualified subsets divided by the size of the secret. The
complexity of the scheme $\mathcal S_n$ is then
$$
    \sigma(\mathcal S_n) = \max \{ f_n(i) \,:\, 1\le i < n \}.
$$
Let us also define $f_n(0)=0$ and $f_n(n)=1$.

\begin{lemma}\label{lemma:induction}
Suppose $\mathcal S_n$ realizes $\mathcal U^*_n$ with $f_n(i)$ as defined
above. Then there is a perfect scheme $\mathcal S_{n+1}$ realizing
$\mathcal U^*_{n+1}$ such that
\begin{equation}\label{eq:recursion}
    n\cdot f_{n+1}(i) = (n+1-i)\cdot f_n(i) + i\cdot f_n(i-1) ~~\mbox{ for
all }~~ 0 < i \le n.
\end{equation}
\end{lemma}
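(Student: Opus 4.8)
The plan is to build $\mathcal S_{n+1}$ from $\mathcal S_n$ by a symmetric decomposition into $n+1$ overlapping copies of $\mathcal S_n$, one for each coordinate $j\in\{1,\dots,n+1\}$, obtained by \emph{forgetting} the $j$-th minimal qualified set. Forgetting $A_j$ identifies the two classes $\eps$ and $\eps\oplus e_j$ of $\mathcal U^*_{n+1}$ that differ only in their $A_j$-membership and collapses the $2^{n+1}-2$ classes onto the $2^n-2$ classes of a $\mathcal U^*_n$ on the remaining $n$ sets, together with two degenerate classes $e_j$ and $\mathbf 1-e_j$ that play the roles of the ``empty'' and the ``all sets'' class. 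Under this projection $\pi_j$ a participant of weight $i$ keeps weight $i$ when $\eps_j=0$ and drops to weight $i-1$ when $\eps_j=1$; this bookkeeping is what will produce the two terms on the right of (\ref{eq:recursion}).

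First I would turn each copy into a genuine sub-scheme $\Gamma_j$ on the participants of $\mathcal U^*_{n+1}$ whose minimal qualified sets are exactly $\{A_k:k\ne j\}$. Starting from $\mathcal S_n$ I assign to the two preimages of each non-degenerate class a $2$-out-of-$2$ split of that class's $\mathcal S_n$-share (so both preimages are needed to recover it), hand the ``all sets'' participant $\mathbf 1-e_j$ the masking role from the proof of Claim \ref{claim:general} (write the copy's sub-secret as $r\oplus s'$, give $r$ to $\mathbf 1-e_j$, and share $s'$ by $\mathcal S_n$), and give the ``empty'' participant $e_j$ nothing. A weight-$i$ participant then receives from copy $j$ a share of relative size $f_n(i)$ if $\eps_j=0$ and $f_n(i-1)$ if $\eps_j=1$; summing over the $n+1$ copies gives total relative share $(n+1-i)f_n(i)+i\,f_n(i-1)$, using $f_n(0)=0$ and $f_n(n)=1$ for the two degenerate contributions.

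Next I would combine the copies. Since $A_m$ is qualified in $\Gamma_j$ exactly when $m\ne j$, each minimal qualified set lies in precisely $n$ of the $n+1$ copies, so I would apply the standard $\lambda$-decomposition with $\lambda=n$: taking $w$ for the common sub-secret size, encode the full secret of size $nw$ as an $[\,n+1,n\,]$ MDS (Reed--Solomon) codeword whose $j$-th symbol is the sub-secret $s_j$ of copy $j$. Any $n$ symbols determine the codeword, hence the secret, so each $A_m$, which reconstructs exactly the $n$ sub-secrets $\{s_j:j\ne m\}$, recovers it. Because the secret now has size $nw$ while the summed shares have size $\bigl[(n+1-i)f_n(i)+i\,f_n(i-1)\bigr]w$, the relative sizes are divided by $n$, which is precisely the normalization of (\ref{eq:recursion}) and explains the factor $n$ on its left-hand side.

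The main obstacle is perfect security, i.e. the leak-freeness of the decomposition: every qualified set of every copy $\Gamma_j$ must be qualified in $\mathcal U^*_{n+1}$, equivalently reconstructing a single sub-secret $s_j$ must force the set to contain some $A_k$. This is exactly where the $2$-out-of-$2$ splitting and the masking of $\mathbf 1-e_j$ are essential: without them an unqualified set such as $A_k$ with the single participant $\mathbf 1-e_k$ deleted would already recover $s_j$ for some $j\ne k$. I would verify combinatorially that a set fully holds all preimage-pairs of $A'_k$ in copy $j$ \emph{and} the mask $r$ precisely when it contains $A_k$, so that no unqualified set reconstructs even one sub-secret; the delicate cases are the weight-$n$ participants $\mathbf 1-e_k$, whose treatment reconciles the boundary value $f_n(n)=1$ with the masking. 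Once leak-freeness is established, perfect security of the combined scheme follows, and the induction on $n$ is complete.
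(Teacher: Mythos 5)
Your proposal is correct and takes essentially the same route as the paper: Stinson's decomposition of $\mathcal U^*_{n+1}$ into the $n+1$ structures $\Gamma_j$ obtained by deleting one minimal qualified set, each realized from $\mathcal S_n$ exactly via the reduction of Claim~\ref{claim:general} (masking the all-remaining-sets participant, splitting the merged-class shares), recombined by the any-$n$-of-$(n+1)$ shadow (MDS) trick with a secret of size $n$, which yields the same bookkeeping $(n+1-i)f_n(i)+i\,f_n(i-1)$. One small index slip: since $\mathbf 1-e_k\notin A_k$, your counterexample to the unmasked scheme should read $A_k\setminus\{\mathbf 1-e_j\}$, which is unqualified yet would recover $s_j$ for any $j\ne k$; the argument is otherwise sound.
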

\begin{proof}
We will use Stinson's decomposition technique from \cite{stinson}. Given
$U^*_{n+1}$ define $n+1$ access structures $\Gamma_1, \dots, \Gamma_{n+1}$
such that the set of participants is the same, while in $\Gamma_j$ the
$j$-th minimal qualified subset is left out. By the assumption and by Claim
\ref{claim:general} above there are perfect schemes realizing $\Gamma_j$
such that participants in $i$ out of the $n$ qualified subsets of $\Gamma_j$
receive $f_n(i)$-size shares for all $0\le i\le n$.

Now use all of these $n+1$ schemes simultaneously. Each (minimal) qualified
subset of $U^*_{n+1}$ recovers exactly $n$ out of the $n+1$ distributed
secrets, while an unqualified subset has no information on any of those
secrets. Using Stinson's trick from \cite{stinson} we can define a secret of
size $n$ and $n+1$ shadows of size $1$ each so that the secret can be
determined by any $n$ of the shadows. Thus the composite scheme will
distribute a secret of size $n$. A participant who is in exactly $i$ of the
minimal qualified subsets will receive shares of size $f_n(i-1)$ from
$\Gamma_j$ when $j$ is one of the minimal qualified subsets he is in ($i$ in
total), and shares of size $f_n(i)$ from $\Gamma_j$ when $j$ is one of the
minimal qualified subsets he is not in ($n+1-i$ in total). This establishes
the claim of the lemma.
\end{proof}

\begin{lemma}\label{lemma:value}
Suppose $f_n(0)=0$, $f_n(n)=1$, $f_2(1)=1$ and for $n\ge 2$ the function 
$f_{n+1}$ satisfies the recursion in {\rm(\ref{eq:recursion})}. Then
\begin{equation}\label{eq:closed}
    f_n(i) = (n-i)\big(h(n)-h(n-i)\big) ~~\mbox{ for all }~~ 0\le i < n,
\end{equation}
where $h(1)=0$ and  $h(n)=\sum_{0<j<n}1/j$ otherwise.
\end{lemma}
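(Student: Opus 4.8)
The plan is to prove (\ref{eq:closed}) by induction on $n$, starting from the base case $n=2$. For $n=2$ one checks directly that the proposed formula reproduces the given values: since $h(1)=0$ and $h(2)=1$, the formula gives $f_2(0)=2\big(h(2)-h(2)\big)=0$ and $f_2(1)=1\cdot\big(h(2)-h(1)\big)=1$, matching the hypotheses. The workhorse identity throughout will be the harmonic recurrence $h(k+1)=h(k)+1/k$ (valid for $k\ge 1$), which follows immediately from the definition of $h$.

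For the inductive step, I assume (\ref{eq:closed}) holds for $f_n$ and establish it for $f_{n+1}$ using the recursion (\ref{eq:recursion}), namely $n f_{n+1}(i)=(n+1-i)f_n(i)+i f_n(i-1)$. I would treat the range $0\le i\le n$ in three parts. The case $i=0$ is immediate: $f_{n+1}(0)=0$ by definition, and the formula evaluates to $(n+1)\big(h(n+1)-h(n+1)\big)=0$. The top case $i=n$ is the only genuine boundary subtlety, because here the recursion calls $f_n(n)=1$, a value coming from the definition rather than from (\ref{eq:closed}); substituting $f_n(n)=1$ and $f_n(n-1)=h(n)$ into the recursion gives $f_{n+1}(n)=h(n)+1/n=h(n+1)$, exactly what the formula predicts once one writes $n+1-n=1$ and $h(1)=0$.

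The main computation is the interior range $1\le i\le n-1$, where both $f_n(i)$ and $f_n(i-1)$ are supplied by the inductive hypothesis. The cleanest route is to substitute $m=n-i$, so that $n+1-i=m+1$ and $i=n-m$, and then expand
$$
 n f_{n+1}(i)=(m+1)\,m\,\big(h(n)-h(m)\big)+(n-m)(m+1)\big(h(n)-h(m+1)\big).
$$
Collecting the $h(n)$ terms and applying $h(m+1)=h(m)+1/m$ collapses the bracket to $(m+1)\big[n(h(n)-h(m))+1-n/m\big]$. Comparing this against the target $n(m+1)\big(h(n+1)-h(m+1)\big)$, and again using the harmonic recurrence on both $h(n+1)$ and $h(m+1)$, shows the two expressions agree, which completes the step.

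The routine-but-error-prone part, and the step I expect to demand the most care, is this last algebraic collapse: the harmonic-number bookkeeping must be tracked precisely, since the $1-n/m$ term appears only after the $1/m$ coming from $h(m+1)-h(m)$ is multiplied by $(n-m)$ and recombined. Organizing the computation through the substitution $m=n-i$ is what makes the cancellations transparent, and keeping the two boundary cases ($i=0$ and $i=n$) separate is what prevents the definitional value $f_n(n)=1$ from being misapplied inside the general formula.
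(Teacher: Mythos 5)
Your proof is correct and follows essentially the same route as the paper: induction on $n$, verifying that the closed form satisfies the recursion (\ref{eq:recursion}) via the harmonic identity $h(k+1)=h(k)+1/k$, with your substitution $m=n-i$ being merely a different bookkeeping of the same cancellation the paper performs after normalizing the recursion into the form (\ref{eq:modrecursion}). If anything, your explicit handling of the top case $i=n$ --- where the recursion must invoke the definitional value $f_n(n)=1$ rather than (\ref{eq:closed}), since the closed form is asserted only for $0\le i<n$ --- is a point the paper's proof passes over silently (its $h$-identity is derived only for $0<i<n$, as it divides by $n-i$), so your version is slightly more complete at the boundary.
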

\begin{proof}
When $i=0$ then, by definition, $f_n(0)=0$, and (\ref{eq:closed}) also yields
$0$. (\ref{eq:closed}) also gives
$$
  f_2(1)= h(2)-h(1)=1-0=1.
$$
Suppose (\ref{eq:closed}) holds for $n\ge 2$ and $0<i<n$. Equation (\ref{eq:recursion}) can be rewritten as
\begin{equation}\label{eq:modrecursion}
\frac{f_{n+1}(i)}{n+1-i} = \frac{n-i}{n}\cdot\frac{f_n(i)}{n-i} + 
     \frac{i}{n}\cdot\frac{f_n(i-1)}{n+1-i}
\end{equation}
Now $h(n+1)=h(n)+1/n$, and $h(n+1-i)=h(n-i)+1/(n-i)$. Consequently
$$
\begin{array}{r@{\;}c@{\;}l}
\displaystyle    h(n+1) &=& \displaystyle h(n)+\frac1n = \frac i n \cdot h(n)+\frac{n-i}n
\cdot h(n)
+\frac1n,\\[8pt]
\displaystyle    h(n+1-i) &=& \displaystyle  \frac i n\cdot h(n+1-i) +
\frac{n-i}n\cdot\big(h(n-i)+\frac1{n-i}\big),
\end{array}
$$
that is
$$
   h(n+1)-h(n+1-i) = \frac{n-i}n\big(h(n)-h(n-i)\big) + 
      \frac i n \big(h(n)-h(n+1-i)\big).
$$
This shows that the function $h(n)-h(n-i)$ satisfies the recursion
(\ref{eq:modrecursion}), that is $f_n(i)=(n-i)\big(h(n)-h(n-i)\big)$ for all
$n$ and $i$ by induction.
\end{proof}

\begin{theorem}
The complexity of $\mathcal U_n$ is asymptotically at most $n/e$ where $e=2.7182\dots$ is the
Euler number.
\end{theorem}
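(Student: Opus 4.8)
The plan is to read the upper bound directly off the explicit scheme built in Lemmas~\ref{lemma:induction} and~\ref{lemma:value}. Starting from $\mathcal S_2$ (where $f_2(1)=1$) and applying Lemma~\ref{lemma:induction} repeatedly, we obtain for every $n$ a perfect scheme $\mathcal S_n$ realizing $\mathcal U^*_n$ whose share sizes are given in closed form by Lemma~\ref{lemma:value}. By Claim~\ref{claim:general} the complexity of the universal structure is therefore bounded by that of this concrete scheme:
$$
   \sigma(\mathcal U_n)=\sigma(\mathcal U^*_n)\le\sigma(\mathcal S_n)=\max_{1\le i<n} f_n(i).
$$
So the whole theorem reduces to an asymptotic estimate of $\max_i f_n(i)$.

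First I would substitute $k=n-i$, so that $k$ runs over $1,\dots,n-1$ and
$$
   f_n(i)=(n-i)\big(h(n)-h(n-i)\big)=k\big(h(n)-h(k)\big)=k\sum_{j=k}^{n-1}\frac1j ,
$$
using $h(m)=\sum_{0<j<m}1/j$. The next step is to replace the harmonic tail by a logarithm. Comparing the decreasing terms $1/j$ with the integral of $1/t$ gives $\sum_{j=k}^{n-1}1/j\le 1/k+\ln(n/k)$, hence
$$
   f_n(i)\le 1+k\ln\frac nk .
$$
The point of using the integral bound, rather than the cruder $h(m)\le 1+\ln m$, is that the overshoot is now a single additive constant and not a term proportional to $k$.

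It then remains to maximize the right-hand side. Writing $x=k/n\in(0,1)$ we have $k\ln(n/k)=n\,(-x\ln x)$, and elementary calculus shows that $-x\ln x$ attains its maximum $1/e$ at $x=1/e$. Therefore $f_n(i)\le 1+n/e$ for every $i$, and so $\sigma(\mathcal U_n)\le n/e+1$, which is $(n/e)(1+o(1))$; this is precisely the assertion that the complexity is asymptotically at most $n/e$.

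I expect the only delicate point to be the passage from the discrete maximum over $i$ to the continuous maximum of $-x\ln x$: one must check that the additive error from the harmonic-to-integral comparison, together with the rounding of the optimal $k\approx n/e$ to an integer, contributes only $o(n)$, so that the bound $n/e$ is not spoiled. Everything else — the closed form, the integral estimate, and the single-variable optimization — is routine.
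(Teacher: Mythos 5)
Your proof is correct, and it shares the paper's skeleton --- the scheme of Lemma~\ref{lemma:induction}, the closed form $f_n(i)=(n-i)\bigl(h(n)-h(n-i)\bigr)$ of Lemma~\ref{lemma:value}, and the reduction of the theorem to estimating $\max_{1\le i<n}f_n(i)$ --- but you handle that maximization by a genuinely different route. The paper \emph{locates} the maximizer: it computes the discrete difference $f_n(i+1)-f_n(i)=h(n-i)+1-h(n)$, finds where it changes sign, and invokes the asymptotic expansion $h(x)=\gamma+\log(x-1/2)+O(x^{-2})$ to obtain the sharp value $\max_i f_n(i)=n/e+(e-1)/(2e)+O(1/n)$. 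You instead bound every term \emph{uniformly}: with $k=n-i$, the integral comparison $h(n)-h(k)\le 1/k+\ln(n/k)$ gives $f_n(i)\le 1+k\ln(n/k)\le 1+n/e$, using the elementary fact that $-x\ln x\le 1/e$ on $(0,1)$. Your route is more self-contained (no asymptotics of harmonic numbers needed, only a one-line integral estimate and single-variable calculus), at the cost of a slightly weaker additive constant: $+1$ versus the paper's $(e-1)/(2e)<1/2$; indeed the paper remarks that $\max_i f_n(i)<n/e+0.5$ for all $n\ge 2$, and also pinpoints where the maximum sits, which your uniform bound does not. Both yield the asymptotic bound $n/e$. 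One further remark: the ``delicate point'' you flag at the end is not actually an issue in your argument --- since $1+k\ln(n/k)\le 1+n/e$ holds for every real $k\in(0,n)$, it holds in particular for every integer $k\in\{1,\dots,n-1\}$, so no rounding of the optimal $k\approx n/e$ and no discrete-to-continuous passage is required; the uniform bound already closes the proof as written.
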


\begin{proof}
By Lemma \ref{lemma:induction} there is a scheme $\mathcal S_n$ realizing
$\mathcal U_n$ with complexity
$$
    \sigma(\mathcal S_n) = \max \{ f_n(i)\,:\,\ 1\le i < n \}
$$
where, by Lemma \ref{lemma:value} $f_n(i)=(n-i)\big(h(n)-h(n-i)\big)$. Now
\begin{equation}\label{eq:derivative}
    f_n(i+1)-f_n(i)= h(n-i)+1- h(n).
\end{equation}
As $h(n-i)$ is strictly decreasing, $f_n$ increases while $h(n-i)\ge
h(n)-1$, and decreases after this value of $i$, and takes its maximum when 
(\ref{eq:derivative}) changes sign. Using the approximation $h(x) = \gamma + 
\log(x-1/2)+O(x^{-2})$ from \cite{mathworld}
where $\gamma$ is the Euler-Mascheroni constant, this happens when
$$
    n-i = \frac n e + \frac{e-1}{2e} + O(1/n),
$$
and then the maximal value of $f_n$ taken at this $i$ is
$$
    f_n(i)=
    (n-i)\big(1+O(1/n)\big) = \frac ne  + \frac{e-1}{2e} + O(1/n).
$$
From here the theorem follows.
\end{proof}
A more careful analysis shows that
$$
\max_i f_n(i) < \frac ne + \frac{e-1}{2e} + \frac 1{3n}< \frac ne + 0.5
$$
for all $n\ge 2$ which gives a non-asymptotic estimate on the complexity of
$\mathcal S_n$.

\section{Lower bound}
As mentioned in the Introduction, for the lower bound we use a variation of
the {\it independent sequence method} from \cite{tight}. For a general
description how the method works, see, e.g., \cite{tight, csirmaz}. Briefly,
for any subset $A$ of the participant one considers the {\it relative
entropy $f(A)$}, which is simply the entropy of all the shares given to
members of $A$ divided by the entropy of the secret. The function $f$
satisfies a certain set of linear inequalities derived from the Shannon
inequalities for the entropy function. For example, $f(\emptyset)=0$, 
for arbitrary subsets $A$ and $B$ of the participants
$$
    f(AB) \le f(A)+f(B),
$$
or, $f(B)\le f(A)$ when $B\subseteq A$. This latter inequality is called {\it monotonicity}.
The {\it strict monotonicity property} says that whenever
$B$ is independent and $B\subseteq A$ is qualified then not
only $f(A)$ is larger than $f(B)$, but the difference is at least one:
$f(B)+1\le f(A)$. As a particular case,
\begin{equation}\label{eq:additive}
    f(A) \le f(a_1)+f(a_2)+\cdots+f(a_k)
\end{equation}
when $A=\{a_1,\dots,\allowbreak a_k\}$.
Here and the in the sequel, as usual, we write $f(a)$ instead of $f(\{a\})$,
and write $AB$ and $Aa$ instead of $A\cup B$ and $A\cup\{a\}$, respectively.

The {\it entropy method} works as follows. To prove that the complexity of
an access structure is at least $\kappa$ it is enough to show that for all
non-negative functions $f$ satisfying all of the above
inequalities there is a participant $p$ with $f(p)\ge \kappa$. 

The {\it independent sequence method} relies on the following lemma. For the
sake of self-containment we supply a proof.

\begin{lemma}\label{lemma:basicentropy}
Suppose $A$ is qualified, $B\subseteq A$, $p$ is a participant not in $A$, 
and for some $B
\subseteq C \subseteq A$, $C$ is independent, while $Cp=C\cup\{p\}$ is
qualified. Then $f(A)-f(B)\ge\allowbreak f(Ap)-f(Bp) + 1$.
\end{lemma}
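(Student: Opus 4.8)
The plan is to derive the inequality from the submodularity of $f$ together with a single use of the strict monotonicity property, anchored at the pair $C\subseteq Cp$. Rewriting the goal as $f(A)+f(Bp)\ge f(B)+f(Ap)+1$, I first note that plain submodularity applied to the sets $A$ and $Bp$ — whose union is $Ap$ and whose intersection is $B$, since $B\subseteq A$ and $p\notin A$ — already gives $f(A)+f(Bp)\ge f(Ap)+f(B)$. Thus everything except the additive constant $1$ comes for free, and the entire difficulty is to recover that extra unit. That unit must originate from the hypothesis that $C$ is independent while $Cp$ is qualified, i.e. from strict monotonicity; the hypothesis that $A$ is qualified will be used as well.

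The key step is to prove the special case in which $B$ is replaced by $C$, namely
$$ f(A)+f(Cp)\ \ge\ f(Ap)+f(C)+1. $$
Submodularity on $A$ and $Cp$ (union $Ap$, intersection $C$) yields $f(A)+f(Cp)\ge f(Ap)+f(C)$, which is the same statement but without the $+1$. To upgrade it, I would bring in the secret explicitly as an auxiliary element $s$ and apply submodularity to the augmented sets $As$ and $Cps$, whose union is $Aps$ and whose intersection is $Cs$:
$$ f(As)+f(Cps)\ \ge\ f(Aps)+f(Cs). $$
Now I invoke the meaning of the hypotheses: since $A$ (hence $Ap$) is qualified, the secret is determined, so $f(As)=f(A)$ and $f(Aps)=f(Ap)$; since $Cp$ is qualified, $f(Cps)=f(Cp)$; and since $C$ is independent, $f(Cs)=f(C)+1$. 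Substituting these four identities turns the displayed submodular inequality into exactly $f(A)+f(Cp)\ge f(Ap)+f(C)+1$. This is where the strictness is consumed: the independence of $C$ contributes the summand $+1$ through $f(Cs)=f(C)+1$, while the three qualified-ness facts merely erase the secret from the remaining terms.

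Finally I would descend from $C$ to the smaller set $B$ by one more submodularity. Since $B\subseteq C$ and $p\notin C$, submodularity on $Bp$ and $C$ (union $Cp$, intersection $B$) gives $f(Bp)+f(C)\ge f(Cp)+f(B)$, that is $f(Bp)-f(B)\ge f(Cp)-f(C)$. Adding this to the special case above, the terms $f(C)$ and $f(Cp)$ cancel and one is left with $f(A)+f(Bp)\ge f(B)+f(Ap)+1$, which is the assertion. The main obstacle, and the reason a naive one-line submodularity argument fails, is precisely that the $+1$ tends to cancel whenever the strict step is routed symmetrically through $C$; the remedy is to attach the strict gap to the secret via the augmented inequality on $As$ and $Cps$, so that it survives as a genuine constant rather than as a difference of $f$-values that is annihilated later.
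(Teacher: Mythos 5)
Your proposal is correct and is essentially the paper's own proof: your augmented submodularity on $As$ and $Cps$, with $f(As)=f(A)$, $f(Aps)=f(Ap)$, $f(Cps)=f(Cp)$ and $f(Cs)=f(C)+1$, is exactly the paper's inequality (\ref{eqlemma:1}), and your final submodularity on $C$ and $Bp$ (union $Cp$, intersection $B$) is its (\ref{eqlemma:2}), added together in the same way. The only difference is expository (your opening remark that plain submodularity yields the statement without the $+1$), so there is nothing to fix.
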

\begin{proof}
As the function $f$ is the relative entropy, it is {\em
submodular}, i.e. $f(X)+f(Y) \ge f(X\cup Y)+f(X\cap Y)$ for any two subsets
$X$ and $Y$. In fact, this is the same inequality which says that the mutual
conditional information is non-negative, see \cite{csiszar-korner}. If $s$
denotes the secret, then $f(Xs)=f(X)$ if $X$ is qualified, as $X$ can
determine the secret, furthermore $f(Xs)=f(X)+1$ if $X$ is unqualified,
expressing the fact that the information members of $X$ have is {\em
independent} of the secret.

By the assumptions of the lemma, $A$ and $Cp$ are qualified, while $C$ is
not, thus by the submodularity
\begin{equation}\label{eqlemma:1}
   f(A)+f(Cp) = f(As)+f(Cps) \ge f(Aps)+ f(Cs) = f(Ap)+(f(C)+1),
\end{equation}
as $C\subseteq A$.
Using submodularity for the sets $C$ and $Bp$ yields
\begin{equation}\label{eqlemma:2}
  f(C)+b(Bp) \ge f(Cp)+f(B),
\end{equation}
here we used that $B\subseteq C$. The sum of (\ref{eqlemma:1}) and
(\ref{eqlemma:2}) gives claim of the lemma immediately.
\end{proof}

The next lemma summarizes the method itself.

\begin{lemma}[Independent sequence method]\label{lemma:independent-sequence}
Suppose $A_0$ is a qualified set, $B_0$ is the empty set,
and $b_1$, $\dots$, $b_n$ are participants not in $A_0$. Let
$B_i=\{b_1,\allowbreak\dots,b_i\}$, and $A_i=A_0\cup B_i$. Suppose that for all 
$0\le i< n$ there is a subset $C_i\subset A_0$ such that $B_iC_i$ is
independent, while $B_{i+1}C_i = b_{i+1} B_i C_i$ is 
qualified. Then $f(A_0)\ge n$.
\end{lemma}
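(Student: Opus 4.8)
The plan is to run a telescoping argument on the ``gap'' quantity $D_i=f(A_i)-f(B_i)$, using Lemma~\ref{lemma:basicentropy} to show that each gap exceeds the next by at least one. First I would record the two endpoints: since $B_0=\emptyset$ we have $f(B_0)=0$, so $D_0=f(A_0)$; and since $B_n\subseteq A_n=A_0\cup B_n$, monotonicity gives $D_n=f(A_n)-f(B_n)\ge 0$. Hence it will suffice to prove $D_i\ge D_{i+1}+1$ for every $0\le i<n$, because summing these $n$ inequalities telescopes to $D_0-D_n\ge n$, whence $f(A_0)=D_0\ge n+D_n\ge n$.

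The key step is to obtain $D_i\ge D_{i+1}+1$ from a single application of Lemma~\ref{lemma:basicentropy}. For a fixed $i$ I would invoke that lemma with $A=A_i$, $B=B_i$, $p=b_{i+1}$, and $C=B_iC_i$, where $C_i\subset A_0$ is the set supplied by the hypothesis. One then checks the four premises. The set $A_i=A_0\cup B_i$ is qualified because it contains the qualified set $A_0$. The chain $B_i\subseteq B_iC_i\subseteq A_0\cup B_i=A_i$ holds since $C_i\subseteq A_0$, giving $B\subseteq C\subseteq A$. By hypothesis $C=B_iC_i$ is independent and $Cp=B_{i+1}C_i$ is qualified. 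Finally $p=b_{i+1}\notin A_i$: indeed $b_{i+1}\notin A_0$ by assumption, and $b_{i+1}\notin B_i$ since otherwise $B_{i+1}C_i=B_iC_i$ could not be simultaneously independent and qualified. With all premises verified, the lemma yields
\[
   f(A_i)-f(B_i)\ge f(A_ib_{i+1})-f(B_ib_{i+1})+1,
\]
which is exactly $D_i\ge D_{i+1}+1$ once one recognizes $A_ib_{i+1}=A_{i+1}$ and $B_ib_{i+1}=B_{i+1}$. The argument is then completed by the telescoping sum described above.

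I do not expect a genuine obstacle: the entire content is carried by Lemma~\ref{lemma:basicentropy}, and the only care required is the bookkeeping that matches the lemma's variables $A,B,C,p$ to the sequence data $(A_i,B_i,C_i,b_{i+1})$ and confirms the inclusion and membership conditions at each index. The one point worth making explicit is \emph{why} $b_{i+1}\notin A_i$ and why the $b_j$ may be taken distinct; both follow from the independent/qualified dichotomy of $B_iC_i$ versus $B_{i+1}C_i$, so no separate distinctness hypothesis is needed. If anything, the subtlest conceptual move is simply identifying $f(A_i)-f(B_i)$ as the right monovariant to telescope, after which the inequality provided by the basic lemma slots in directly.
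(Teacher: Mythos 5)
Your proof is correct and is essentially identical to the paper's: both telescope the quantity $f(A_i)-f(B_i)$, applying Lemma~\ref{lemma:basicentropy} with $C=B_iC_i$ and $p=b_{i+1}$, and close with monotonicity at $i=n$ and $f(B_0)=0$ at $i=0$. If anything, you are slightly more careful than the paper, since you explicitly verify the premise $b_{i+1}\notin A_i$ via the independent/qualified dichotomy, a point the paper leaves implicit.
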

In the usual terminology the sequence $A_0$, $A_1$, $\dots$ is {\it made
independent} by the sequence $B_0$, $B_1$, $\dots$ from where the method got its
name.
\begin{proof}
By the monotonicity of the function $f$,  $f(A_n)-f(B_n)\ge 0$ as 
$A_n\supseteq B_n$.
By Lemma \ref{lemma:basicentropy},
$$
     f(A_i)-f(B_i) \ge f(A_{i+1})-f(B_{i+1}) + 1
$$
which is shown by the set $C=B_iC_i$ and the participant $p=b_{i+1}$. Putting all of these inequalities together we get
$$
    f(A_0)-f(B_0) \ge n.
$$
As $B_0$ is the emptyset, $f(B_0)=0$ and the claim of the lemma follows.
\end{proof}

\medskip

As it was remarked in the Introduction, among all access structures with $n$
minimal qualified subsets, $\mathcal U_n$ has the highest complexity. 
Consequently to show that $\mathcal U_n$ has complexity at least $\kappa$ it
is enough to find {\it any} access structure with $n$ minimal qualified
subsets with complexity ${}\ge\kappa$. This is exactly what we will do.

\begin{theorem}\label{thm:lower-bound}
For each $n\ge 2$ there is an access structure with $n$ minimal qualified 
subsets and complexity at least $n/(1+\log_2 n)$.
\end{theorem}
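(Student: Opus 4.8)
The plan is to exploit the remark preceding the theorem: since $\mathcal U_n$ has the largest complexity among all access structures with $n$ minimal qualified sets, it suffices to exhibit one such structure whose complexity the entropy method can bound from below by $n/(1+\log_2 n)$. I would build this structure around a single qualified ``core'' set $A_0$ and then force a long independent sequence through it, so that Lemma~\ref{lemma:independent-sequence} yields a large value of $f(A_0)$, and hence a large individual share via the subadditive bound~(\ref{eq:additive}).

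Concretely, first I would fix an integer $k$ with $2^k\ge n$ and take $A_0=\{a_1,\dots,a_k\}$ together with $m=n-1$ fresh participants $b_1,\dots,b_m$. To each $b_j$ I attach a distinct proper subset $C_{j-1}\subsetneq A_0$, and declare the minimal qualified sets to be $A_0$ itself and the sets $Q_j=\{b_j\}\cup C_{j-1}$ for $j=1,\dots,m$. Since every $Q_j$ carries its own private participant $b_j\notin A_0$, each $Q_j$ is minimal and $A_0$ is minimal as well, so the structure has exactly $n=m+1$ minimal qualified sets. Setting $B_i=\{b_1,\dots,b_i\}$ and $A_i=A_0\cup B_i$, the triggering set $Q_{i+1}\subseteq B_{i+1}C_i$ shows that $B_{i+1}C_i$ is qualified at every step.

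The key step is to arrange that each $B_iC_i$ is simultaneously independent. A set $B_iC_i=\{b_1,\dots,b_i\}\cup C_i$ contains some $Q_\ell$ exactly when $\ell\le i$ and $C_{\ell-1}\subseteq C_i$, and it never contains $A_0$ because $C_i$ is proper. Hence independence of all the $B_iC_i$ is equivalent to the purely combinatorial condition that no earlier subset is contained in a later one, and this I would secure by listing the chosen subsets in order of non-increasing size (any linear extension of reverse inclusion). With this ordering Lemma~\ref{lemma:independent-sequence} applies verbatim and gives $f(A_0)\ge m$; then $m\le f(A_0)\le f(a_1)+\dots+f(a_k)\le k\max_i f(a_i)$ by~(\ref{eq:additive}), so some participant has relative share at least $m/k=(n-1)/k$, and the complexity is at least $(n-1)/k$.

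The last step is to choose $k$ and verify the claimed bound, and this is where the real difficulty lies. Over the reals the optimal choice $k=\log_2 n$ would give $(n-1)/\log_2 n$, which exceeds $n/(1+\log_2 n)$ precisely because $n-1\ge\log_2 n$ for all $n\ge2$; taking $k=\lceil\log_2 n\rceil$ already yields the asymptotically matching bound $\sim n/\log_2 n$. The genuine obstacle is integrality: at $n=2^k$ the construction is tight and the estimate $(2^k-1)/k\ge 2^k/(k+1)$ is comfortable, but for $n$ strictly between two powers of two the upward rounding of $k$ can pull $(n-1)/\lceil\log_2 n\rceil$ below the target. The hard part will therefore be to recover the clean non-asymptotic bound for every $n$ -- either by a more economical construction realizing an effective, non-integer base size (for instance by blowing up or taking a suitable product of the power-of-two building blocks so that the number of minimal sets can be tuned exactly to $n$), or by supplementing the finitely many small cases with the already-known exact complexities of $\mathcal U_2$, $\mathcal U_3$ and $\mathcal U_4$.
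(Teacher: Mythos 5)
Your construction and your application of Lemma~\ref{lemma:independent-sequence} are internally correct --- indeed, by making $A_0$ itself one of the minimal qualified sets you satisfy the lemma's hypothesis that $A_0$ be qualified \emph{literally}, which is scrupulous. But that choice is exactly what loses the theorem. Spending one of the $n$ minimal qualified sets on $A_0$ leaves only $m=n-1$ trigger sets $Q_j$ for the independent sequence, so the lemma yields only $f(A_0)\ge n-1$ and, via~(\ref{eq:additive}), the bound $(n-1)/\lceil\log_2 n\rceil$. This is genuinely weaker than the claimed $n/(1+\log_2 n)$, and not just for finitely many $n$: for every $n=2^j+1$ with $j\ge1$ one has
$$
\frac{n-1}{\lceil\log_2 n\rceil}=\frac{2^j}{j+1}
<\frac{2^j+1}{j+1+\log_2(1+2^{-j})}=\frac{n}{1+\log_2 n},
$$
since cross-multiplying reduces this to $2^j\log_2(1+2^{-j})<j+1$, and the left side is at most $1/\ln 2<j+1$. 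So the bad values $n=3,5,9,17,33,\dots$ (and others nearby, e.g.\ $n=6,10$) form an infinite set. Consequently your second patch (``supplement the finitely many small cases using $\mathcal U_2,\mathcal U_3,\mathcal U_4$'') rests on a false premise, and your first patch (blow-ups or products of power-of-two blocks) is not worked out; as it stands the proposal proves a weaker statement, not Theorem~\ref{thm:lower-bound}.

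The paper differs from you precisely at the point where you lose the $+1$: it does not reserve any minimal qualified set for $A_0$. Its $n$ minimal qualified sets are the cumulative sets $B_1C_0,\,B_2C_1,\dots,B_nC_{n-1}$ themselves, so that \emph{every} minimal qualified set is the trigger of one step of the sequence; Lemma~\ref{lemma:independent-sequence} is invoked with $A_0=X$ to give $f(X)\ge n$, and since only $n\le 2^k-1$ distinct proper subsets of $X$ are needed, the choice $k=\lceil\log_2(n+1)\rceil\le 1+\log_2 n$ gives $n/k\ge n/(1+\log_2 n)$ for every $n\ge2$. (There is a trade-off here that you correctly sensed: in the paper's structure $X$ is \emph{not} qualified, so the stated hypothesis of Lemma~\ref{lemma:independent-sequence} is not literally met and the first step of the telescoping chain loses its $+1$; recovering it requires an extra argument, e.g.\ strict monotonicity applied at the final step while $B_n$ is still independent. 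That is an issue with the paper's write-up, not a licence for your version.) The idea you are missing, and the only way to reach the stated constant, is to let all $n$ minimal qualified sets serve as steps of the independent sequence instead of sacrificing one of them to make $A_0$ qualified.
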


\begin{proof}
Let $k\ge 1$ to be chosen later and $X$ be a set with exactly $k$ elements.
Enumerate the proper subsets of $X$ in {\it decreasing order} as $C_0$,
$C_1$, $\dots$, $C_{\ell-1}=\emptyset$ where $\ell=2^k-1$. ``Decreasing'' means
that sets with larger index does not have more elements; in particular 
$C_i\not\subset C_j$ whenever $0\le i < j <\ell$.

Let $b_1$, $\dots$, $b_\ell$ be new participants not in $X$, and define
$B_0=\emptyset$, and for $1\le i \le \ell$ let $B_i$ be the set
$\{b_1,\allowbreak b_2,\dots,b_i\}$. There will be $n\le\ell$ minimal qualified
subsets: $B_1C_0$, $B_2C_1$, $\dots$, $B_nC_{n-1}$. These sets can indeed
form a collection of minimal qualified subsets as none of them is a subset
of any other. Also we remark that the sets $B_iC_i$ for $0\le i < n$ are
{\it independent} as none of $B_{j+1}C_j$ is a subset of $B_iC_i$. Indeed,
if $j\ge i$ then $B_{j+1}$ is not a subset of $B_i$, and if $j<i$ then $C_j$
is not a subset of $C_i$. Thus we can apply Lemma
\ref{lemma:independent-sequence} with $A_0=X$ giving
$$
    f(X)\ge n.
$$
As $X$ has $k$ elements, say $a_1$, $\dots$, $a_k$, applying property
(\ref{eq:additive}) we get
$$
    f(a_1)+\cdots+f(a_k) \ge f(X) \ge n,
$$
meaning that some participant in this access structure has an $f$-value at
least $n/k$. Thus the complexity of the structure is also at least $n/k$. 
Our aim is to choose $k$ as small as possible. We have only a single
restriction, namely that $n$ should not exceed the value $\ell=2^k-1$. Thus
there is always an appropriate $k$ which is less than or equal to $(1+\log_2n)$,
proving the theorem.
\end{proof}

As a consequence of Theorem \ref{thm:lower-bound} the universal structure
$\mathcal U_n$ also has complexity at least $n/(1+\log_2n)$, proving that the
complexity is at least $n/\log_2n$ asymptotically.

\section*{Acknowledgment}
The author would like to thank one of the referees for pointing out an
omission in Lemma \ref{lemma:basicentropy}, and for the suggestions which
helped to improve the paper.

\end{document}